\newtheorem{theorem}{Theorem}
\begin{document}
\title{Simultaneous Secrecy and Covert Communications (SSACC) in Mobility-Aware RIS-Aided Networks}
\author{
    Yanyu~Cheng,
    Yujian~Hu, 
    Haoran~Liu,
    Hua~Zhong,
    Wei~Wang,
    and~Pan~Li,~\IEEEmembership{Fellow,~IEEE}
    
    \thanks{Yanyu Cheng is with the School of Electrical and Electronic Engineering, Nanyang Technological University, Singapore 639798 (e-mail: ycheng022@e.ntu.edu.sg).
    
    Yujian Hu, Haoran Liu, and Hua Zhong are with the School of Cyberspace, Hangzhou Dianzi University, Hangzhou 310018, China (e-mail: \{255270191; 232270067; hzhong\}@hdu.edu.cn).
    
    Wei Wang is with the School of Information and Communications Engineering, Xi'an Jiaotong University, Xi’an 710049, China (e-mail: w25wang@xjtu.edu.cn).

    Pan Li is with the School of Computer Science, Hangzhou Dianzi University, Hangzhou 310018, China (e-mail: lipan@ieee.org).}
}
\maketitle

\begin{abstract}
In this paper, we propose a simultaneous secrecy and covert communications (SSACC) scheme in a reconfigurable intelligent surface (RIS)-aided network with a cooperative jammer. The scheme enhances communication security by maximizing the secrecy capacity and the detection error probability (DEP). Under a worst-case scenario for covert communications, we consider that the eavesdropper can optimally adjust the detection threshold to minimize the DEP. Accordingly, we derive closed-form expressions for both average minimum DEP (AMDEP) and average secrecy capacity (ASC). To balance AMDEP and ASC, we propose a new performance metric and design an algorithm based on generative diffusion models (GDM) and deep reinforcement learning (DRL). The algorithm maximizes data rates under user mobility while ensuring high AMDEP and ASC by optimizing power allocation. Simulation results demonstrate that the proposed algorithm achieves faster convergence and superior performance compared to conventional deep deterministic policy gradient (DDPG) methods, thereby validating its effectiveness in balancing security and capacity performance.
\end{abstract}

\begin{IEEEkeywords}
Covert communication, physical layer security, reconfigurable intelligent surface, secrecy communication, deep reinforcement learning.
\end{IEEEkeywords}

\section{Introduction}
\IEEEPARstart{D}{riven} with breakthroughs in integrated space-air-ground networks, full-spectrum access, ultra-dense heterogeneous systems, and secrecy communications, 6G is positioned to play a pivotal role as a catalyst for economy-wide and society-wide digital transformation. Whether in the military or civilian domain, individuals inevitably utilize wireless channels to transmit critical information. Traditionally, the security of wireless communications relies on key-based encryption (cryptographic)\cite{yang2024from5gto6g}. Although cryptographic can protect transmitted information from interception to some extent, encryption technologies can become vulnerable with advances in high-performance computing technology \cite{Bhat2021}. Consequently, physical layer security (PLS) technologies have emerged to address this need. Unlike conventional security measures applied at higher layers, the security provided by PLS is fundamentally based on the physical properties of the wireless communication channel. 

Traditionally, PLS refers to secrecy communications (SC), and SC draws on the information-theoretic principles of modern cryptography, first established by Shannon in \cite{shannon1949communication}. The concept of secrecy capacity was subsequently established by Wyner \cite{wyner1975wire}, who examined a discrete memoryless wiretap channel and demonstrated the feasibility of secure transmission against an eavesdropper. SC leverages the inherent stochastic properties of noise and communication channels to restrict the amount of information accessible to an illegitimate receiver at the bit level. The theoretical limit of SC is quantified by the secrecy capacity, which is achievable provided that the primary channel quality exceeds that of the eavesdropper \cite{wyner1975wire}. This approach capitalizes on the intrinsic randomness of wireless media, combined with specialized coding schemes and signal processing methods, to ensure confidentiality of communication \cite{8014490,8653906,9170262}. However, SC is constrained by secrecy capacity and cannot guarantee protection for information beyond this capacity. In complex network environment, the channel conditions for an eavesdropper can become comparable or superior to the main channel, leading to a significant erosion of secrecy capacity \cite{OntheSecrecy}. 

To address the aforementioned issue, covert communication (CC), as a novel PLS technology, was proposed \cite{10090449}. The fundamental principle of CC is to ensure covertness by minimizing the probability of detection by adversaries. However, the pursuit of covertness introduces inherent trade-offs, often leading to reduced data rates, limited communication range, and diminished spectral efficiency. Consequently, finding a method that satisfies all security and performance requirements presents significant challenges. Moreover, in dynamic networks, imperfect channel state information (CSI) can degrade the effectiveness of CC \cite{MAKHDOOM2022102784}. The inherently dynamic nature of the wireless environment adds another layer of complexity to the design of reliable CC systems. Therefore, to meet the multifaceted demands of future wireless networks effectively, adaptive security strategies are essential. Such strategies must maintain a holistic balance among covertness, secrecy, and efficiency to guaranty robust security in dynamic scenarios \cite{10210075}.

\subsection{Motivation and Contributions}
SC and CC provide distinct levels of security: SC safeguards the confidentiality of the information content, while CC conceals the very existence of the communication. Balancing these two approaches can effectively minimize the probability of detection while minimizing the risk of information leakage. This paper introduces a framework, simultaneous secrecy and covert communication (SSACC), to realize the balance between security and covert performance. We analyze its performance and optimize the power allocation to enhance overall system security while considering user mobility. The primary contributions are as follows.
\begin{itemize} 
    \item {We propose an SSACC framework for reconfigurable intelligent surface (RIS)-aided systems. In this framework, the RIS extends coverage, while we employ a cooperative jammer to safeguard transmissions between the transmitter and the intended receiver.} 
    \item {Assuming a worst-case scenario for covert communications, we analyze the warden’s optimal detection threshold for minimizing detection error probability (DEP) and derive a closed-form expression for the average minimum DEP (AMDEP).} 
    \item {We introduce a novel performance metric, security quality of experience (QoE), which is a weighted sum of secrecy capacity and effective covert rate protected by the DEP. This metric captures the comprehensive security performance by accounting for information transmission that exceeds the secrecy capacity but remains protected via covert mechanisms. Accordingly, a closed-form expression for the average secrecy capacity is derived for evaluation.}
    \item {To address the dynamic channel conditions induced by user mobility, we propose a deep reinforcement learning (DRL) algorithm leveraging generative diffusion models (GDM) to optimize power allocation between the transmitter and the jammer. This algorithm aims to maximize the QoE in the receiver while ensuring high levels of AMDEP and secrecy capacity.} 
    \item {The effectiveness of the proposed algorithm is validated through simulations that demonstrate the robustness of the developed scheme. Moreover, the GDM-based policy learner achieves faster convergence and superior performance compared to the conventional deep deterministic policy gradient (DDPG) method in the context of SSACC.} 
\end{itemize}

\subsection{Organization}
The remainder of this paper is structured as follows. Section \ref{sec:System Module} details the proposed SSACC system model. Performance metrics and channel statistics are presented in Section \ref{sec:Performance Metrics and Channel Statistics}. In Section \ref{sec:Performance Analysis and QoE Definition}, we provide a comprehensive analysis of the warden's DEP and the legitimate user's secrecy capacity and define the proposed QoE metric. Section \ref{sec:Power Allocation Optimization for Maximizing QoE} proposes a GDM-based DRL algorithm to optimize power allocation for maximizing the QoE. Section \ref{sec:Numerical Result} presents numerical results to validate the theoretical analysis and demonstrate the superiority of the proposed algorithm over benchmarks. Finally, Section \ref{sec:Conclusion} concludes the paper.

\section{Related Work}
\label{sec:Related Work}
This section surveys recent advances in SC and CC, examining foundational models and representative methodologies along with state-of-the-art developments. Finally, it highlights critical challenges and suggests promising avenues for future research.

\subsection{Secrecy Communication}
SC has attracted significant attention as a promising approach to ensuring the security of wireless communications without relying on traditional cryptographic methods. The concept of the wiretap channel was first proposed by Wyner in his seminal work \cite{wyner1975wire}, which demonstrated that information theoretic secrecy is achievable at the physical layer by utilizing the disparity between the main and eavesdropping channels, without relying on encryption keys. Building on this foundational work, the model was further extended by Csiszár and Körner to encompass broadcast channels with confidential information. Their research \cite{1055892} further characterized the secrecy capacity for a broader class of non-degraded wiretap channels.

Building on these foundations, recent studies have focused on secrecy enhancement through resource allocation and optimization techniques. By simultaneously optimizing both the transmit power allocation and the artificial noise (AN) power splitting ratio, Xing et al. achieved a reduction in outage probability for delay-constrained secure communication, employing a dual decomposition and alternating optimization framework as detailed in \cite{7018095}. In \cite{7063622}, the location of an energy-harvesting node was optimized under PLS constraints to minimize the secrecy outage probability. The work in \cite{7841596} proposed a framework to jointly minimize secrecy outage probability and maximize average harvested energy within simultaneous wireless information and power transfer (SWIPT) systems. In \cite{6094170}, the authors addressed power allocation in an MISO system, aiming to minimize total transmit power while meeting a target secrecy probability. A multiuser MISO network with friendly jamming was studied in \cite{7792199}, where optimal power allocation strategies were developed to minimize the total power allocated to both information and jamming signals while satisfying secure quality of experience (QoE) requirements. In \cite{7982788}, a closed-form power allocation solution was derived for non-orthogonal multiple access (NOMA) systems to minimize total transmit power while ensuring secrecy.

\subsection{Covert Communication}
CC, sometimes termed low probability of detection (LPD) communication, has garnered considerable attention from both academic and industrial research communities \cite{10090449}. To achieve undetectable transmission, various strategies have been proposed based on the behavior and capabilities of the warden. In \cite{9838318}, Du et al. developed a power allocation algorithm utilizing particle swarm optimization for UAV networks equipped with multiple antennas and a single warden (Willie). In \cite{chen2021multi}, Chen et al. proposed a scheme designed to obfuscate signal detection by a warden with uncertain location information. Furthermore, Cheng et al. in \cite{10210075} proposed an optimal power allocation scheme for a RIS-aided NOMA network assisted by a friendly jammer, aiming to counter a warden employing a dynamic eavesdropping strategy. The aim was to optimize the connectivity throughput from a multi-antenna transmitter to a full-duplex jamming receiver, subject to a constraint on covert outage probability.

Collaboration among multiple friendly or adversarial nodes has been explored as a critical factor influencing the performance of covert systems. In \cite{jiang2020covert}, it was demonstrated that coordinated warden behavior can significantly reduce DEP, thereby posing a serious threat to covert communication reliability. Soltani et al. in \cite{8445707} proposed a strategy wherein the friendly node closest to the warden generates artificial noise, considering both single and multiple collaborating warden scenarios. The investigation of multi-hop covert communication was conducted in \cite{8315147}, with a focus on networks featuring multiple collaborating wardens. In particular, most existing studies assume static locations of the warden, which can underestimate their detection capacity. In practical scenarios, wardens can adaptively change their positions and strategies to improve detection performance. To address this, \cite{9099441} proposed an iterative algorithm employing a multi-antenna jammer to mitigate the impact of Alice’s imperfect knowledge of the warden’s location.

\subsection{Joint SC and CC}
Recently, the integration of physical layer security and covert communications has attracted increasing attention, particularly with the aid of advanced hardware architectures like simultaneously transmitting and reflecting (STAR)-RIS and movable antennas (MA). Unlike traditional approaches that focus solely on either secrecy or covertness, these emerging works explore the simultaneous satisfaction of distinct security requirements for multiple users. For instance, Hu et al. \cite{3354452} proposed a STAR-RIS enhanced framework to serve a covert user and a security user simultaneously. By optimizing the transmission and reflection coefficients, they achieved low detection probability for the covert user while maximizing the secrecy capacity for the security user. In a similar vein, Zhang et al. \cite{10333375} utilized STAR-RIS to assist NOMA communications, guaranteeing the covertness of the weak user while maintaining the connection for the strong user. Similarly, Wu et al. \cite{Wu_Entropy2025} investigated MA-aided STAR-RIS systems, leveraging the spatial degrees of freedom to enhance secrecy rates in integrated sensing and communication scenarios. However, it is worth noting that these works primarily achieve joint performance by spatially separating resources to serve different users with specific security needs (i.e., strengthening covertness for one user and secrecy for another separately), rather than unifying them into a comprehensive metric for a single transmission link.
\section{System Model}
\label{sec:System Module}
The SSACC system model under consideration, illustrated in Fig. \ref{System Model}, includes a controlled friendly jammer (Jammer), a legitimate transmitter (Alice), a receiver (Bob), and a reconfigurable intelligent surface (RIS) \cite{9108996,9424177}. A warden (Willie) attempts to detect and decode the communications exchanged between Alice and Bob. Specifically, located in a non line of sight region without a direct link to Alice, Bob relies on the RIS for communication support. The jammer, which is operated by Alice, is utilized to disrupt Willie's ability to detect the data exchange. 

The RIS comprises $N$ reflecting elements, with the complex gain of the $n$-th element given by $\beta_{n}e^{j\theta_{n}}$ $(j=\sqrt{-1})$, where $\beta_n \in {[0,1]}$ represents the amplitude coefficient, and $\theta_n \in {[0,2\pi)}$ denotes the phase shift coefficient. These parameters control the amplitude (or attenuation due to passive reflection) and phase shift of the reflected signal. The complex gain matrix of the RIS is expressed as $\mathbf{\Theta} = diag(\beta_{1}e^{j\theta_{1}},\beta_{2}e^{j\theta_{2}},...,\beta_{N}e^{j\theta_{N}})$ $(j=\sqrt{-1})$. Signals undergoing multiple reflections by the RIS are neglected due to substantial path loss \cite{wu2019towards}.
\begin{figure}[t]
    \centering
    \includegraphics[trim=40 20 10 20,clip,draft=false,width=0.5\textwidth]{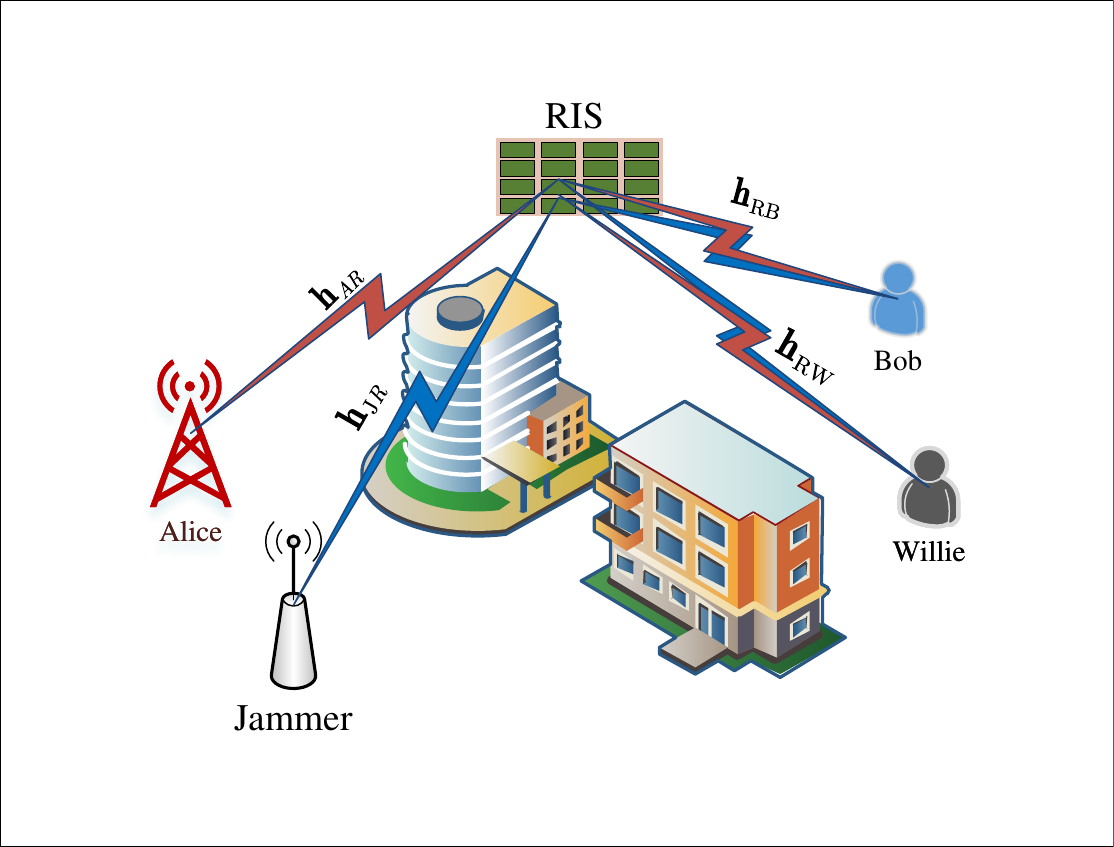}
    \caption{System model for SSACC system. The figure depicts a scenario with only reflection links, where Alice is the legitimate transmitter, Jammer is the friendly jammer, Bob is the legitimate receiver, and Willie is the eavesdropper.} \label{System Model}
\end{figure}

\subsection{Channel Model}
This paper employs a quasi-static flat fading channel model to characterize the wireless communication. The channel response vectors between Alice/Jammer and the RIS are represented by $\mathbf{h}_{AR}$ and $\mathbf{h}_{JR}$, respectively, while the fading coefficient vectors from the RIS to Bob and Willie are represented by $\mathbf{h}_{RB}$ and $\mathbf{h}_{RW}$. Each of these vectors has a dimension of $N \times 1$. All fading coefficients are assumed to follow a $\text{Nakagami-}m$ fading model with fading parameter $m_{\chi}$, where $\chi \in\{AR, JR, RB, RW\}$ corresponds to the respective fading links \cite{cheng2021downlink}. In addition, each channel experiences path loss, characterized by a path loss coefficient $\alpha_{\chi}$.

Regarding CSI availability, the following assumptions are made:
\begin{itemize} 
    \item{Alice has access to the instantaneous CSI of the Alice-RIS-Bob and Jammer-RIS-Bob links, as the Jammer is under Alice’s control \cite{wu2019towards}.} 
    \item{Alice only possesses statistical CSI for the RIS-Willie links, since Willie aims to conceal its presence from the legitimate system, making real-time CSI acquisition challenging for Alice \cite{lv2021covert}.} 
    \item{Willie is assumed to have perfect knowledge of CSI for all links, representing the worst-case scenario for covert communication \cite{lv2021covert}.} 
\end{itemize}
Specifically, various channel estimation techniques have been proposed for RIS-enabled systems to acquire accurate CSI \cite{9366805,10210075}.

\subsection{Signal Model}
Alice transmits a signal sequence $x_A(k) = \sqrt{P_A}s(k), k=1,2,\dots,K$, where $P_A$ represents Alice’s transmit power and $s(k)$ $(\mathbf{E}(|s(k)|^2) = 1)$ is the message intended for Bob. Additionally, the Jammer sends an interference signal sequence $x_J(k) = \sqrt{P_J}s_J(k), k=1,2,\dots,K$, where the jamming power $P_J$ is a random variable uniformly distributed over $[0, \hat{P}_J]$, adding a level of unpredictability and uncertainty \cite{9367492}. Its probability density function (PDF) can be expressed as
\begin{equation}\label{PDF of the Jammer power}
    f_{P_J}\left( x \right)=\left\{\begin{array}{l}
        \frac{1}{\hat{P}_J},\ 0\le x\le \hat{P}_J,\\
        0,\ \ \ \text{otherwise}.\\
    \end{array} \right. 
\end{equation}
The signal obtained by Bob is given by:
\begin{equation}\label{The signal of Bob}
    \begin{split}
    y_B\left( x \right) &=\mathbf{h}_{RB}^{T}\mathbf{\Theta h}_{AR}\sqrt{\mathcal{L}_1} x_A\left( k \right) \\
    &+ \mathbf{h}_{RB}^{T}\mathbf{\Theta h}_{JR}\sqrt{\mathcal{L}_2}x_J\left( k \right) +n_B\left( k \right) ,
    \end{split}
\end{equation}
where $\sqrt{\mathcal{L}_1}$ and $\sqrt{\mathcal{L}_2}$ are the path loss of Alice-RIS-Bob and Jammer-RIS-Bob links, respectively. Here, $\mathcal{L}_1=L(d_{RB})L(d_{AR})$, $\mathcal{L}_2=L(d_{RB})L(d_{JR})$, where $L(d_\chi)=d_{\chi}^{-\alpha _{\chi}}$, $d_\chi$ is the distance, $a_\chi$ is the path loss coefficient, and ${\chi}\in\{AR,JR,RB,RW\}$\cite{Wang_COMST2023}.

Bob receives the signal with a signal-to-interference-plus-noise ratio (SINR) that is formulated as
\begin{equation}\label{SINR B}
    \psi _B=\frac{P_A|\mathbf{h}_{RB}^{T}\mathbf{\Theta h}_{AR}|^2\mathcal{L}_1}{\rho P_J|\mathbf{h}_{RB}^{T}\mathbf{\Theta h}_{JR}|^2\mathcal{L}_2+\sigma _{B}^{2}},
\end{equation}
where ${\rho} \in \left[ 0,1 \right]$ is the self-interference cancelation coefficient.

Willie aims to identify the presence of communication and recover the information exchanged between Alice and Bob. This leads to a binary detection problem. Specifically, the signals observed by Willie correspond to one of two distinct hypotheses: the null hypothesis $\mathcal{H}_0$, which represents the case where Alice transmits no signal to Bob, and the alternative hypothesis $\mathcal{H}_1$, which corresponds to an active transmission from Alice to Bob. The expressions for the received signals under each hypothesis are given as follows:
\begin{equation}\label{H0}
    \begin{split}
    \mathcal{H}_0:y_W\left( x \right) =\mathbf{h}_{RW}^{T}\mathbf{\Theta h}_{JR}\sqrt{\mathcal{L}_3} x_J\left( k \right)+n_W\left( k \right),
    \end{split}
\end{equation}
\begin{equation}\label{H1}
\begin{split}
\mathcal{H}_1:y_W\left( x \right)=&\mathbf{h}_{RW}^{T}\mathbf{\Theta h}_{JR}\sqrt{\mathcal{L}_3} x_J\left( k \right) \\
&+\mathbf{h}_{RW}^{T}\mathbf{\Theta h}_{AR}\sqrt{\mathcal{L}_4} x_A\left( k \right) +n_W\left( k \right), 
\end{split}
\end{equation}
where $\mathcal{L}_3=L(d_{RW})L(d_{JR})$ and $\mathcal{L}_4=L(d_{RW})L(d_{AR})$. 

Willie performs binary detection using a radiometer based on (\ref{H0}) and (\ref{H1}). Observing the average power received at Willie, $P_W=\frac{1}{K}\sum_{k=1}^K{|y_W\left( k \right) |^2}$, Willie’s decision rule takes the following rule
\begin{equation}\label{PW_Detection}
P_W\underset{\mathcal{D}_0}{\overset{\mathcal{D}_1}{\gtrless}}\tau.
\end{equation}
Here, $\tau > 0$ represents Willie’s detection threshold, while $D_0$ and $D_1$ indicate binary decisions under hypotheses $\mathcal{H}_0$ and $\mathcal{H}_1$, respectively \cite{10210075}. We assume that Willie has access to an unlimited quantity of signal samples to perform binary hypothesis testing, i.e., $K \to \infty$ \cite{lv2021covert}. Consequently, the uncertainties associated with the transmitted signals and received noise are eliminated, and the average received power at Willie is expressed as
\begin{equation}\label{PW_2}
    P_W=\left\{ \begin{array}{l}
        \zeta _1P_J+\zeta _2,\ \ \ \mathcal{H}_0,\\
        \zeta _1P_J+\zeta _3,\ \ \ \mathcal{H}_1,\\
    \end{array} \right.
\end{equation}
where $\zeta_1=\mathcal{L}_3|\mathbf{h}_{RW}^{T}\mathbf{\Theta h}_{JR}|^2$, $\zeta_2=\sigma_{W}^2$, and $\zeta_3=\mathcal{L}_4P_A|\mathbf{h}_{RW}^{T}\mathbf{\Theta h}_{AR}|^2+\sigma_{W}^2$.

To achieve SC, Alice must ensure that her transmission rate does not exceed the secrecy capacity, which is defined as
\begin{equation}\label{R_S}
R_S=[\log_2(1+\psi_B)-\log_2(1+\psi_W)]^+.
\end{equation}
Based on (\ref{H1}), $\psi_W$, representing the SINR of Willie’s received signal under the alternate hypothesis, is expressed as
\begin{equation}\label{SINR W}
    \psi _W=\frac{P_A|\mathbf{h}_{RW}^{T}\mathbf{\Theta h}_{AR}|^2\mathcal{L}_4}{ P_J|\mathbf{h}_{RW}^{T}\mathbf{\Theta h}_{JR}|^2\mathcal{L}_3+\sigma _{W}^{2}}.
\end{equation}

\section{Performance Metrics and Channel Statistics}
\label{sec:Performance Metrics and Channel Statistics}
In this section, we commence by defining the performance metrics employed in our analysis, followed by an introduction of the CSI required for performance evaluation. This paper focuses primarily on two key system metrics: secrecy capacity and DEP.

\subsection{Performance Metrics}
\subsubsection{Secrecy Capacity} Let $R_B$ and $R_W$ represent the Shannon capacities of the main channel and the eavesdropper channel, respectively\cite{Akbar_OJCOMS2024}. The secrecy capacity $R_S$ is given by:
\begin{equation}\label{Shannon secrecy capacity}
    R_S=[R_B-R_W]^+=[\log_2(1+\psi_B)-\log_2(1+\psi_W)]^+,
\end{equation}
where $R_B = \log_2(1+\psi_B)$ and $R_W=\log_2(1+\psi_W)$.

\subsubsection{Detection Error Probability} Willie’s detection error occurs when he makes an incorrect decision on binary detection,  either a false alarm, where he detects a transmission during Alice's silence, or a miss detection, where he fails to identify an ongoing transmission. Its formal definition is given by \cite{tao2021achieving,li2020covert}
\begin{equation}\label{DEP}
    \mathbb{P}_{DE}=\mathbb{P}_{MD}+\mathbb{P}_{FA}, 
\end{equation}
where $ \mathbb{P}_{MD}=P_{r}\{\mathcal{D}_0|\mathcal{H}_1\} $ denotes the miss detection probability (MDP), $ \mathbb{P}_{FA}=P_{r}\{\mathcal{D}_1|\mathcal{H}_0\} $ denotes the false alarm probability (FAP), and $\mathbb{P}_{DE} \in [0,1] $. 

\subsection{Channel Statistics}
To analyze system performance, an accurate CSI is required.
\subsubsection{Alice-RIS-Bob Link} The reflective properties of RIS can be intelligently controlled to provide Bob with optimal channel quality. The channel gain for the Alice–RIS–Bob link is expressed as
\begin{equation}\label{channel gain of ARB}
    \left\lvert\mathbf{h}_{RB}^T\mathbf{\Theta}\mathbf{h}_{AR}\right\rvert=\left\lvert \sum_{n = 1}^{N}\beta_nh_{RB,n}h_{AR,n}e^{j\theta_n}\right\rvert,
\end{equation}
where $h_{RB,n}$ and $h_{AR,n}$ are the $n$th element of $\mathbf{h}_{RB}$ and $\mathbf{h}_{AR}$, respectively. To maximize channel gain, phase shifts are aligned with co-phase incoming and outgoing signals\cite{8811733}. Therefore, the channel gain under the optimal $\{\theta^*_n\}$ is 
\begin{equation}\label{optimal channel gain}
    \left\lvert\mathbf{h}_{RB}^T\mathbf{\Theta}\mathbf{h}_{AR}\right\rvert=\beta^2\left(\sum_{n = 1}^{N}|h_{RB,n}||h_{AR,n}|\right)^2,
\end{equation}
where $\beta_n=\beta,\forall_n$ without loss of generality. The PDF characterizing the equivalent channel gain under the optimal phase shift configuration ${\theta^*_n}$ can be formulated as follows: $|g_{ARB}|^2=\frac{\left( \sum\limits_{n=1}^N{|h_{RB,n}||h_{AR,n}|} \right) ^2}{N(1-\mu)}$, where $\mu =\frac{1}{m_{RB}m_{AR}}\left( \frac{\Gamma \left( m_{RB}+\frac{1}{2} \right)}{\Gamma \left( m_{RB} \right)} \right) ^2\left( \frac{\Gamma \left( m_{AR}+\frac{1}{2} \right)}{\Gamma \left( m_{AR} \right)} \right) ^2$. As the number of reflecting elements $N$ becomes large, $|g_{ARB}|^2$ asymptotically follows a noncentral chi-square distribution, denoted as $|g_{ARB}|^2\overset{\raisebox{-0.2ex}{.}}{\thicksim }\chi^{\prime 2}_1(\lambda)$, where $\lambda =\frac{N\mu}{1-\mu}$. Its PDF and CDF have been given by \cite{10210075}
\begin{equation}\label{PDF of ARB}
	f_{|g_{ARB}|^2}\left( x \right) =e^{-\frac{x+\lambda}{2}}\sum_{i=0}^{\infty}{\frac{\lambda ^ix^{i-\frac{1}{2}}}{i!2^{\left( 2i+\frac{1}{2} \right)}\Gamma \left( i+\frac{1}{2} \right)}},
\end{equation}
and
\begin{equation}\label{CDF of ARB}
	F_{|g_{ARB}|^2}\left( x \right) =e^{-\frac{\lambda}{2}}\sum_{i=0}^{\infty}{\frac{\lambda ^i\gamma \left( i+\frac{1}{2},\frac{x}{2} \right)}{i!2^i\Gamma \left( i+\frac{1}{2} \right)}},
\end{equation}
for $x \geq 0$, respectively.

\subsubsection{Alice-RIS-Willie, Jammer-RIS-Bob, and Jammer-RIS-Wille Links}For these channels, the phase shifts of the RIS can be treated as random variables, since the RIS is optimized to maximize channel quality at Bob \cite{lv2021covert,9079918}. Denoted as $g_{ARW}=\frac{\mathbf{h}_{RW}^{T}\mathbf{\Theta h}_{AR}}{\beta}=\sum\limits_{n=1}^N{h_{RW,n}h_{AR,n}e^{j\theta _{n}^{*}}}$, $g_{JRW}=\frac{\mathbf{h}_{RW}^{T}\mathbf{\Theta h}_{JR}}{\beta}=\sum\limits_{n=1}^N{h_{RW,n}h_{JR,n}e^{j\theta _{n}^{*}}}$, and $g_{JRB}=\frac{\mathbf{h}_{RB}^{T}\mathbf{\Theta h}_{JR}}{\beta}=\sum\limits_{n=1}^N{h_{RB,n}h_{JR,n}e^{j\theta _{n}^{*}}}$. When $N$ is sufficiently large and $\{\theta_n\}$ is random  \cite{9079918}, $g_{\phi } (\phi \in \{ARW,JRB,JRW\})$ tends to follow a Gaussian distribution with zero mean and  variance $N$. Consequently, the PDF and CDF of $|g_{\phi}|^2 (\phi \in \{ARW,JRB,JRW\})$ are given by
\begin{equation}\label{PDF of E}
    f_{|g_{\phi}|^2}\left( x \right) =\frac{1}{N}e^{-\frac{x}{N}},
\end{equation}
and
\begin{equation}\label{CDF of E}
    F_{|g_{\phi}|^2}\left( x \right) =1-e^{-\frac{x}{N}},
\end{equation}
for $x\geq0$, respectively.

\section{Performance Analysis and QoE Definition}
\label{sec:Performance Analysis and QoE Definition}
This section provides a comprehensive analysis of secrecy capacity and DEP performance. Specifically, we evaluate DEP under both a fixed detection threshold and an optimal threshold in Willie’s binary detection process. This analysis aims to provide information on the effects of threshold selection on the effectiveness of covert communication.

\subsection{Average Minimum Detection Error Probability}

\subsubsection{Fixed Detection Threshold} Based on (\ref{PW_Detection}) and (\ref{PW_2}), the MDP and the false alarm (FA) probability have been given in \cite{10210075}, which are as follows: 
\begin{equation}\label{FAP}
\mathbb{P}_{FA}=\text{P}_{\text{r}}\left( \zeta _1P_J+\zeta _2>\tau \right),
\end{equation}
and
\begin{equation}\label{MDP}
    \mathbb{P}_{MD}=\text{P}_{\text{r}}\left( \zeta _1P_J+\zeta _3\leq \tau \right),
\end{equation}
respectively. $\mathbb{P}_{FA}$ and $\mathbb{P}_{MD}$ under a fixed threshold can be derived as 
\begin{equation}\label{FAP_Fixed}
    \begin{split}
    \mathbb{P}_{FA}&=\text{P}_{\text{r}}\left( P_J>\frac{\tau -\zeta _2}{\zeta _1} \right)  \\
    &=\begin{cases}
        \begin{array}{r r}
            1,                                         &\tau <\tau _1,            \\
            1-\frac{\tau -\zeta _2}{\zeta _1\hat{P}_J},&\tau _1\leq \tau <\tau _2,\\
            0,                                         &\tau \geq \tau _2,        \\
        \end{array}
    \end{cases}
\end{split}
\end{equation}
and
\begin{equation}\label{MDP_Fixed}
    \begin{split}
        \mathbb{P}_{MD}&=\text{P}_{\text{r}}\left( P_J\leq \frac{\tau -\zeta _3}{\zeta _1} \right) \\
    &=\begin{cases}
        \begin{array}{r r}
            0,                                         &\tau <\tau _3,            \\
            \frac{\tau -\zeta _3}{\zeta _1\hat{P}_J},  &\tau _3\leq \tau <\tau _4,\\
            1,                                         &\tau \geq \tau _4,        \\
        \end{array}
    \end{cases}
\end{split}
\end{equation}
where $\tau_1=\zeta_2$, $\tau_2=\zeta_1\hat{P}_J+\zeta_2$, $\tau_3=\zeta_3$, and $\tau_4=\zeta_1\hat{P}_J+\zeta_3$.

Based on (\ref{FAP_Fixed}) and (\ref{MDP_Fixed}), we can derive that the DEP under fixed threshold has two cases as follows: 

$Case \ 1:$ When $\tau_2 \leq \tau_3$, the DEP is given by
\begin{equation}\label{DEP_S1}
    \xi =\begin{cases}
        \begin{array}{l r}
        1,                     &\tau <\tau _1,      \\
        1-\frac{\tau -\zeta _2}{\zeta _1\hat{P}_J},&\tau _1\leq \tau <\tau _2,\\
        0,                     &\tau _2\leq \tau <\tau _3,\\
        \frac{\tau -\zeta _3}{\zeta _1-\hat{P}_J}, &\tau _3\leq \tau <\tau _4,\\
        1,                     &\tau \geq \tau _4.    \\
        \end{array}
    \end{cases}
\end{equation}

$Case \ 2:$ When $\tau_2 > \tau_3$, the DEP is given by
\begin{equation}\label{DEP_S2}
    \xi =\begin{cases}
        \begin{array}{l r}
        1,                     &\tau <\tau _1,      \\
        1-\frac{\tau -\zeta _2}{\zeta _1\hat{P}_J},&\tau _1\leq \tau <\tau _3,\\
        1-\frac{\zeta_3-\zeta_2}{\zeta_1\hat{P}_J},                     &\tau _3\leq \tau <\tau _2,\\
        \frac{\tau -\zeta _3}{\zeta _1-\hat{P}_J}, &\tau _2\leq \tau <\tau _4,\\
        1,                     &\tau \geq \tau _4.    \\
        \end{array}
    \end{cases}
\end{equation}

\subsubsection{Optimal Detection Threshold} It is important to note that the DEP discussed previously is evaluated under a fixed detection threshold. However, in the worst-case scenario for CC, Willie, acting as an intelligent warden, has the ability to optimize the detection threshold to minimize the DEP. Consequently, it is crucial to analyze performance under this worst-case condition.

In the first case $(\tau_2 \leq \tau_3)$, the minimum DEP is clearly 0, which corresponds to $\tau^*\in[\tau_2,\tau_3)$.

For the second case $(\tau_2 > \tau_3)$, the DEP is 1 for $\tau\in[0,\tau_1)\bigcup[\tau_4,\infty]$. The DEP decreases monotonically in $\tau\in[\tau_1,\tau_2)$ and increases in $\tau\in[\tau_3,\tau_4)$. Hence, for any $\tau\in[\tau_2,\tau_3)$, the DEP is the minimum value. Therefore, the optimal threshold set by Willie is given by 
\begin{equation}\label{optimal threshold}
    \tau ^*\in\begin{cases}
        \begin{array}{l r}
        \left[ \tau _2,\tau _3 \right] , &\tau _2<\tau _3,\\
        \left[ \tau _3,\tau _2 \right] ,&    \tau _3<\tau _2.\\
        \end{array}
    \end{cases}
\end{equation}

Based on (\ref{DEP_S2}) and (\ref{optimal threshold}), the DEP of Willie under the optimal detection threshold can be expressed as follows:
\begin{equation}\label{DEP_optimal}
    \mathbb{P}_{MDE}=\begin{cases}
    \begin{array}{l r}
        0,&\tau _2 \leq \tau _3,\\
        1-\frac{\zeta _3 -\zeta _2}{\zeta _1\hat{P}_J},&   \tau _2> \tau _3.\\
        \end{array}
    \end{cases}
\end{equation}
\subsubsection{AMDEP Under the Optimal Detection Threshold} The AMDEP can be derived from the statistical CSI of all channels. Based on (\ref{DEP_optimal}), we note that $\mathbb{P}_{MDE}$ is a piecewise function, allowing us to conclude that
\begin{equation}\label{AMDEP}
    \mathbf{E}(\mathbb{P}_{MDE})=\int_0^{\infty}{\int_0^{\frac{\eta _1y}{\eta _2}}{\mathbb{P}_{MDE}f_{|g_{ARW}|^2}\left( x \right) dxf_{|g_{JRW}|^2}\left( y \right) dy}}.
\end{equation}
\begin{theorem}
Based on (\ref{PDF of E}) and (\ref{AMDEP}), the AMDEP of Willie is given by
\begin{equation}\label{AMDEP Result}
    \begin{split}
    \mathbf{E}(\mathbb{P}_{MDE})=&\int_0^{\infty}{\int_0^{\frac{\eta _1y}{\eta _2}}{\frac{e^{-\frac{x+y}{N}}}{N^2}\left( 1-\frac{\eta _2x}{\eta _1y} \right)}}dxdy \\
    =&1-\frac{\eta _2}{\eta _1}\log \left( \frac{\eta _2+\eta _1}{\eta _2} \right) ,
    \end{split}
\end{equation}
where $\tau_2 > \tau_3$, $\eta _1=P_J\mathcal{L}_3\beta ^2$, and $\eta _2=P_A\mathcal{L}_4\beta ^2$.
\end{theorem}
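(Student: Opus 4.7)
The plan is to take the integral expression in the first equality of the theorem as given (it follows directly by substituting the piecewise form of $\mathbb{P}_{MDE}$ from (\ref{DEP_optimal}) in the regime $\tau_2>\tau_3$ together with the exponential PDFs (\ref{PDF of E}) for $x=|g_{ARW}|^2$ and $y=|g_{JRW}|^2$), and to focus the work on reducing it to the closed form $1-\tfrac{\eta_2}{\eta_1}\log\!\left(\tfrac{\eta_2+\eta_1}{\eta_2}\right)$. Before any calculation I would verify the inner upper limit $x<\eta_1 y/\eta_2$ against the condition $\tau_2>\tau_3$, i.e.\ $\zeta_1\hat P_J>\zeta_3-\zeta_2$, so that the integrand $1-\eta_2 x/(\eta_1 y)$ is non-negative throughout the domain.

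First I would compute the inner $x$-integral in closed form. Splitting
\begin{equation*}
\int_0^{\eta_1 y/\eta_2}e^{-x/N}\!\left(1-\tfrac{\eta_2 x}{\eta_1 y}\right)dx
=\int_0^{\eta_1 y/\eta_2}e^{-x/N}dx-\tfrac{\eta_2}{\eta_1 y}\int_0^{\eta_1 y/\eta_2}xe^{-x/N}dx,
\end{equation*}
the first piece is elementary, and the second follows from one integration by parts using $\int x e^{-x/N}dx=-Nxe^{-x/N}-N^2 e^{-x/N}$. After cancellation the exponential terms drop out cleanly and the inner integral collapses to
\begin{equation*}
N-\tfrac{\eta_2 N^2}{\eta_1 y}\bigl(1-e^{-\eta_1 y/(N\eta_2)}\bigr).
\end{equation*}

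Next I would substitute this back and separate the outer integral into two pieces. The first piece, $\int_0^\infty \tfrac{e^{-y/N}}{N}\,dy$, integrates to $1$ and supplies the leading constant. The second piece reduces to
\begin{equation*}
\tfrac{\eta_2}{\eta_1}\int_0^{\infty}\frac{e^{-y/N}-e^{-y(\eta_1+\eta_2)/(N\eta_2)}}{y}\,dy,
\end{equation*}
which is exactly a Frullani-type integral of the form $\int_0^\infty\tfrac{e^{-ay}-e^{-by}}{y}dy=\log(b/a)$. Identifying $a=1/N$ and $b=(\eta_1+\eta_2)/(N\eta_2)$ gives the $\log\!\left(\tfrac{\eta_2+\eta_1}{\eta_2}\right)$ factor, and the $N$'s cancel. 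Combining the two pieces yields the claimed closed form.

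The routine part is the inner integral by parts; the one step that deserves care is the outer integral, where the $1/y$ singularity at the origin is integrable only because the two exponentials agree at $y=0$. Recognising this Frullani structure (and hence avoiding any special-function detour through $\mathrm{Ei}$) is the main insight, after which the result follows immediately.
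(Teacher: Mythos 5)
Your proposal is correct: the inner integration by parts gives $N-\tfrac{\eta_2 N^2}{\eta_1 y}\bigl(1-e^{-\eta_1 y/(N\eta_2)}\bigr)$, and the outer integral is indeed the Frullani form $\int_0^\infty\frac{e^{-y/N}-e^{-y(\eta_1+\eta_2)/(N\eta_2)}}{y}\,dy=\log\!\left(\frac{\eta_1+\eta_2}{\eta_2}\right)$, which reproduces the stated closed form $1-\frac{\eta_2}{\eta_1}\log\!\left(\frac{\eta_2+\eta_1}{\eta_2}\right)$. The paper asserts this result by direct evaluation of the same double integral without exhibiting the intermediate steps, so your derivation is essentially the paper's (implicit) approach, with the details correctly filled in, including the check that $\tau_2>\tau_3$ corresponds exactly to the integration region $x<\eta_1 y/\eta_2$.
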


\subsection{Average Secrecy Capacity}
Secrecy capacity refers to the maximum achievable rate for reliable and secure communication between an authorized transmitter and receiver, while ensuring that the information leakage to any eavesdropper remains constrained. Specifically, the perfect secrecy capacity is defined as the difference between the channel capacity of the legitimate user and that of the eavesdropper \cite{wyner1975wire,5605343}. Based on (\ref{R_S}), the average secrecy capacity (ASC) can be expressed as follows:
\begin{equation}
        \mathbf{E}\left( R_S \right) = \mathbf{E}\left( [R_B - R_W]^+ \right).
\end{equation}

Given that the RIS is optimized to maximize the channel gain for Bob and that the gains of the respective channels are independent, the probability that $\psi_B < \psi_W$ is negligible. Consequently, the ASC can be expressed as

\begin{equation}\label{ASC}
    \mathbf{E}\left( R_S \right) \simeq \mathbf{E}\left( R_B \right) -\mathbf{E}\left( R_W \right), 
\end{equation}
where $\mathbf{E}\left( R_B \right)$ is the average channel capacity of Bob and $\mathbf{E}\left( R_W \right)$ is the average channel capacity of Willie. Hence, $\mathbf{E}\left( R_B \right)$ and $\mathbf{E}\left( R_W \right)$ are required, 
which are obtained in the following theorem after some mathematical manipulations.
\begin{theorem}
$\mathbf{E}\left( R_B \right)$ is given by
\begin{equation}\label{Average RB}
    \mathbf{E}\left( R_B \right) \simeq \sum_{l=1}^{u_1}{\omega _{1,l}\Phi_1\left( x_{1,l} \right)},
\end{equation}
where $u_1$ is the order of the Gauss-Laguerre polynomial, $x_{1,l}$ is the $l$-th root of Laguerre polynomial $L_{u_1}(x)$, $\omega_{1,l}=\frac{x_{1,l}}{(u_1 + 1)^2[L_{u_1 + 1}(x_{1,l})]^2}$ and 
\begin{equation}
    \Phi_1=e^{\frac{x-\lambda}{2}}\sum_{i=0}^{\infty}{\frac{\lambda ^ix^{i-\frac{1}{2}}}{i!2^{\left( 2i+\frac{1}{2} \right)}\Gamma \left( i+\frac{1}{2} \right)}\log _2\left( 1+\nu _1 \right)},
\end{equation}
where $\nu_1= \frac{P_A\mathcal{L}_1N(1-\mu)\beta^2x}{\sigma _{B}^{2}}$. 

$\mathbf{E}(R_W)$ is given by
\begin{equation}\label{Average RW}
    \mathbf{E}\left( R_W \right) \simeq \nu _2\sum_{l=1}^{u_2}{\omega _{2,l}\Phi_2\left( x_{2,l} \right)},
\end{equation}
where $\nu _2=\frac{P_A\mathcal{L}_4}{\hat{P}_J\mathcal{L}_3\ln 2}$, $u_2$ is the order of the Gauss-Laguerre polynomial, $x_{2,l}$ is the $l$-th root of the Laguerre polynomial $L_{u_2}(x)$, $\omega_{2,l}=\frac{x_{2,l}}{(u_2 + 1)^2[L_{u_2 + 1}(x_{2,l})]^2}$ and 
\begin{equation}
    \Phi_2=\frac{e^{\frac{N-y}{N}}}{y}\left[ \ln \left( 1+\nu _3 \right) +e^{\nu_4}\Gamma (\nu_4) -e^{\nu_5}\Gamma(\nu_5)\right],
\end{equation}
where $\nu _3=\frac{\beta ^2\hat{P}_J\mathcal{L}_3y}{\sigma _{W}^{2}}$, $\nu _4=\frac{\sigma _{W}^{2}}{\beta ^2NP_A\mathcal{L}_4}\left( 1+v_3 \right) $, and $\nu _5=\frac{\sigma _{W}^{2}}{\beta ^2NP_A\mathcal{L}_4}$.

Based on (\ref{Average RB}) and (\ref{Average RW}), $\mathbf{E}\left( R_S \right)$ can be approximated as
\begin{equation} \label{Result of ASC}
    \mathbf{E}\left( R_S \right) \simeq \sum_{l=1}^{u_1}{\omega _{1,l}\Phi_1\left( x_{1,l} \right)} - \nu _2\sum_{l=1}^{u_2}{\omega _{2,l}\Phi_2\left( x_{2,l} \right)}.
\end{equation}
\end{theorem}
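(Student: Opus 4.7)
The plan is to establish the two Gauss-Laguerre approximations for $\mathbf{E}(R_B)$ and $\mathbf{E}(R_W)$ separately, and then invoke (\ref{ASC}) to recover $\mathbf{E}(R_S)$.

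For $\mathbf{E}(R_B)$, the key observation is that under the optimal phase alignment the dependence on all channels in $\psi_B$ collapses to the single scalar $|g_{ARB}|^2$. After absorbing the residual jamming cross-term at Bob into the effective noise (consistent with the RIS being steered toward Bob and with the Alice-controlled interference cancellation assumption), $\log_2(1+\psi_B)$ becomes $\log_2(1+\nu_1(x))$ with $x = |g_{ARB}|^2$. Writing the expectation as $\int_0^\infty \log_2(1+\nu_1(x)) f_{|g_{ARB}|^2}(x)\,dx$ and inserting the PDF from (\ref{PDF of ARB}), I would multiply and divide by $e^{-x}$ to match the Laguerre weight. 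The resulting integrand, $e^{-x}\Phi_1(x)$ with $\Phi_1$ exactly as stated in the theorem, is then approximated by the $u_1$-point Gauss-Laguerre rule, yielding (\ref{Average RB}).

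For $\mathbf{E}(R_W)$, the proof unfolds by iterated integration in a carefully chosen order. Let $x = |g_{ARW}|^2$ and $y = |g_{JRW}|^2$, both exponential with mean $N$ by (\ref{PDF of E}), and $P_J$ uniform on $[0,\hat{P}_J]$ by (\ref{PDF of the Jammer power}). First, integrating over $x$ I would use the standard identity $\int_0^\infty \ln(1+\alpha x)\,(1/N) e^{-x/N}\,dx = e^{1/(N\alpha)}\, E_1\!\bigl(1/(N\alpha)\bigr)$, with $\alpha$ a function of $P_J$ and $y$; this reduces the triple expectation to a double integral in $(P_J,y)$ whose integrand has the form $e^{c(P_J,y)}E_1(c(P_J,y))$ with $c$ affine in $P_J$. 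Second, integrating over the uniform $P_J$ I would invoke the antiderivative $\int e^t E_1(t)\,dt = e^t E_1(t) + \ln t + C$, which produces, after evaluation at $P_J = 0$ and $P_J = \hat{P}_J$, the closed-form bracket $\ln(1+\nu_3) + e^{\nu_4}E_1(\nu_4) - e^{\nu_5}E_1(\nu_5)$; identifying the upper endpoint with $\nu_4$ and the lower with $\nu_5$ matches the notation in the theorem, while the $\ln(\cdot)$ boundary term collapses to $\ln(1+\nu_3)$. Third, the remaining integral over $y$ is of the form $\int_0^\infty g(y)\,e^{-y/N}\,dy$, to which I would apply Gauss-Laguerre quadrature after pulling out the constant prefactor $\nu_2 = P_A\mathcal{L}_4/(\hat{P}_J\mathcal{L}_3\ln 2)$, producing (\ref{Average RW}). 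Finally, (\ref{Result of ASC}) follows by subtracting the two approximations according to (\ref{ASC}).

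The main obstacle I expect is the $P_J$ integration: evaluating the antiderivative of $e^t E_1(t)$ and then verifying that the boundary terms rearrange cleanly into the exact $\nu_3,\nu_4,\nu_5$ expressions requires careful bookkeeping of several multiplicative constants involving $\beta^2, N, \mathcal{L}_3, \mathcal{L}_4$, which cancel only at the end. A secondary difficulty is handling the infinite series appearing in (\ref{PDF of ARB}) inside the Gauss-Laguerre step for $\mathbf{E}(R_B)$: it must be kept symbolically inside $\Phi_1$ and, for numerical use, truncated at enough terms to ensure convergence of both the series and the quadrature. Justifying the approximation (\ref{ASC}) itself is immediate from the RIS being steered toward Bob and the independence between $|g_{ARB}|^2$ and $(|g_{ARW}|^2,|g_{JRW}|^2)$, as already argued in the paper.
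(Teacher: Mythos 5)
Your proposal is correct, and for $\mathbf{E}(R_B)$ and the final subtraction via (\ref{ASC}) it coincides with the paper's proof: set the Bob-side interference to zero ($\rho=0$), write the expectation against the noncentral chi-square PDF (\ref{PDF of ARB}), factor $e^{-(x+\lambda)/2}=e^{-x}e^{(x-\lambda)/2}$ to expose the Laguerre weight, and apply the $u_1$-point rule with $\Phi_1$ as stated. For $\mathbf{E}(R_W)$ you take a genuinely different (equally valid) route through the triple integral: the paper integrates in the order $P_J \to x \to y$, first computing $\mathcal{J}_2$ over the uniform $P_J$ (which produces the four-term $t\ln t$ bracket in (\ref{First JF})), then the exponential $x$-average yielding $\mathcal{J}_3$ in terms of $g(x)=e^{x}\Gamma(0,x)$, and finally Gauss--Laguerre in $y$; you instead integrate $x$ first via $\int_0^\infty \ln(1+\alpha x)\tfrac{1}{N}e^{-x/N}dx=e^{1/(N\alpha)}E_1(1/(N\alpha))$, then $P_J$ via the antiderivative $\int e^tE_1(t)\,dt=e^tE_1(t)+\ln t$, whose endpoints are exactly $\nu_5$ and $\nu_4=\nu_5(1+\nu_3)$, so the boundary terms collapse to $\ln(1+\nu_3)+e^{\nu_4}E_1(\nu_4)-e^{\nu_5}E_1(\nu_5)$ — the same bracket as $\Phi_2$, reading the theorem's $\Gamma(\nu_4)$, $\Gamma(\nu_5)$ as $\Gamma(0,\cdot)=E_1(\cdot)$, which is what the paper's own $g(\cdot)$ makes explicit. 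Your ordering arguably buys a cleaner derivation: the constants $\nu_4,\nu_5$ appear directly as integration limits rather than after regrouping the paper's $\nu_7,\dots,\nu_{10}$ terms, at the price of needing the $e^tE_1(t)$ antiderivative rather than only elementary inner integrals. The only loose end is the final Gauss--Laguerre step in $y$: to match the weight $e^{-y}$ you must either rescale $y$ or insert $e^{y}e^{-y}$, which is where the exponential prefactor of $\Phi_2$ comes from (your computation would give $e^{y(N-1)/N}/y$, versus the paper's printed $e^{(N-y)/N}/y$ — a discrepancy in the paper's statement, not in your reasoning); spelling out that weight-matching would make your step three complete.
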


\begin{proof}
    We first evaluate the average channel capacity of Bob. In this scenario, since the Jammer operates cooperatively under Alice's control, we assume that the interference affecting Bob is completely mitigated, resulting in $\rho=0$. Accordingly, the average channel capacity of Bob can be derived as follows:
    \begin{equation}
        \mathbf{E}\left( R_B \right) =\mathbf{E}\left( \log _2\left( 1+\varpi _1|g_{ARB}|^2 \right) \right),
    \end{equation} 
    where $\varpi _1 = \frac{P_A\mathcal{L}_1N(1-\mu)\beta^2}{\sigma _{B}^{2}}$. Based on (\ref{PDF of ARB}), the average channel capacity of Bob can be expressed as 
    \begin{equation}
        \mathbf{E}\left( R_B \right) =\underset{\mathcal{J}_1}{\underbrace{\int_0^{\infty}{e^{-\frac{x+\lambda}{2}}\sum_{i=0}^{\infty}{\frac{\lambda ^ix^{i-\frac{1}{2}}}{i!2^{\left( 2i+\frac{1}{2} \right)}\Gamma \left( i+\frac{1}{2} \right)}\log _2\left( 1+\varpi _1 x\right)}d_x}}}.
    \end{equation}
    Using the Gauss–Laguerre quadrature, we approximate $\mathcal{J}_1$ by
    \begin{equation}
        \mathcal{J}_1\backsimeq \sum_{l=1}^{u_1}{\omega _{1,l}\Phi_1\left( x_{1,l} \right)}.
    \end{equation}
    
    The expectation value of $R_W$ is given by (\ref{RW Expression}) shown at the bottom of the next page, where $\varpi_2=\beta^2P_A\mathcal{L}_4$ and $\varpi_3=\beta^2\mathcal{L}_3$. Based on (\ref{PDF of the Jammer power}), since $P_J$ follows a uniform distribution, it can be easily derived that   
    \begin{figure*}[hb]
        \centering   
        \hrule 
        \begin{equation}\label{RW Expression}
            \begin{split}
            \mathbf{E}\left( R_W \right) =&\mathbf{E}\left( \log _2\left( 1+\frac{\varpi _2|g_{ARW}|^2}{\varpi _3P_J|g_{JRW}|^2+\sigma _{W}^{2}} \right) \right) \\
            =&\underset{\mathcal{J}_4}{\underbrace{\int_0^{\infty}{\underset{\mathcal{J}_3}{\underbrace{\int_0^{\infty}{e^{-\frac{x+y}{N}}\underset{\mathcal{J}_2}{\underbrace{\int_0^{\hat{P}_J}{\log _2\left( 1+\frac{\varpi _2x}{\varpi _3yz+\sigma _{W}^{2}} \right) f_{P_J}\left( z \right) dz}}f_{|g_{ARW}|^2}\left( x \right)}dx}}f_{|g_{JRW}|^2}}}\left(y\right)dy}}.
        \end{split}
        \end{equation}
    \end{figure*}
    \begin{equation}\label{First JF}
        \mathcal{J}_2= \frac{\nu _6}{y}\left[ \nu_7\ln(\nu_7) - \nu_8\ln(\nu_8) - \nu_9\ln(\nu_9)+ \nu_{10}\ln(\nu_{10})\right],
    \end{equation}
    where $\nu_6=\frac{1}{\hat{P}_J\mathcal{L}_3\ln 2}$, $\nu_7=\frac{\sigma _{W}^{2}}{\beta ^2}$, $\nu_8=P_A\mathcal{L}_4x+\frac{\sigma _{W}^{2}}{\beta ^2}$, $\nu_9=\frac{\sigma _{W}^{2}}{\beta ^2}+\hat{P}_J\mathcal{L}_3y$, and $\nu_{10}=P_A\mathcal{L}_4x+\frac{\sigma _{W}^{2}}{\beta ^2}+\hat{P}_J\mathcal{L}_3y$.
    Based on (\ref{PDF of E}),(\ref{RW Expression}) and (\ref{First JF}), $\mathcal{J}_3$ can be derived from the channel gain distribution of ARW, which is given by
    \begin{equation}\label{Second JF}
        \mathcal{J}_3=\frac{\nu_{2}}{y}\left[ \ln \left( \frac{\nu _{11}}{\sigma _{W}^{2}} \right) +g\left( \frac{\nu _{11}}{N^2\varpi _2} \right) -g\left( \frac{\sigma _{W}^{2}}{N^2\varpi _2} \right) \right],
    \end{equation}
    where $\nu_{11}=\sigma _{W}^{2}+\beta ^2\hat{P}_JL_3y$ and $g(\cdot)$ is a function which is given by 
    \begin{equation}
        g(x)=e^{x}\Gamma(0,x).
    \end{equation}
    $\mathcal{J}_4$ can be expressed as
    \begin{equation}
        \mathcal{J}_4=\int_0^{\infty}{e^{-\frac{y}{N}}}\mathcal{J}_3dy.
    \end{equation}
    Next, $\mathcal{J}_4$ can be approximated using Gauss-Laguerre quadrature as $\mathcal{J}_4\backsimeq  \nu _2\sum_{l=1}^{u_2}{\omega _{2,l}\Phi_2\left( x_{2,l} \right)}$. Finally, we can derive $\mathbf{E}(R_W)$ as in (\ref{Average RW}). This completes the proof.
\end{proof}

\subsection{Quality of Experience}
We introduce a novel performance metric that integrates secrecy capacity with DEP. Willie can obtain information beyond secrecy capacity, where Alice is transmitting, but because of the existence of the DEP, unsafe information is protected by CC. The proposed QoE is given by
\begin{equation}\label{QoE}
    QoE = \alpha \mathbf{E}(R_S) + \varsigma \mathbf{E}(R_W) \mathbf{E}(\mathbb{P}_{MDE}),
\end{equation}
where $\alpha$ and $\varsigma$ represent the weights given to SC and CC, respectively.

\section{Power Allocation Optimization for Maximizing QoE}
\label{sec:Power Allocation Optimization for Maximizing QoE}
This section provides a comprehensive analysis of the ergodic covert rate performance in SSACC systems. Building upon the examination of fixed power allocation as discussed in (\ref{Average RB}), we extend our investigation to adaptive power allocation strategies. By optimizing power allocation, the objective is to enhance the CC rate under varying channel conditions. To achieve this, an optimal power allocation solution is derived using a GDM-based algorithm, designed to dynamically adjust to channel variations and maximize security performance.

\subsection{Problem Formulation}
To enhance system performance, our objective is to maximize QoE across various distance scenarios by optimizing the power allocation between Alice’s transmit power ($P_A$) and the upper bound of the jamming power ($\hat{P}_J$). Consequently, the optimization problem can be formulated as follows:
\begin{subequations}
    \label{optimal 1}
    \begin{alignat}{2}
        \underset{P_A,\hat{P}_J}{\text{maximize}}\ \ &QoE = \alpha \mathbf{E}(R_S) + \varsigma \mathbf{E}(R_W) \mathbf{E}(\mathbb{P}_{MDE}), \label{optimal target}\\ 
        \text{subject to}\ \ &P_A + \hat{P}_J\leq P_{max},\label{optimal power}\\
        &\mathbf{E}(\mathbb{P}_{MDE})\geq 1 - \Lambda ,\label{optimal AMDEP}\\
        &0\leq \alpha \leq 1, \label{optimal alpha}\\
        &D_{LRW}\leq d_{RW} \leq D_{HRW}, \label{optimal drw}\\
        &D_{LAR}\leq d_{AR} \leq D_{HAR}, \label{optimal dar}\\
        &D_{LRB}\leq d_{JR} \leq D_{HRB}, \label{optimal djr}\\
        &D_{LRB}\leq d_{RB} \leq D_{HRB}, \label{optimal drb}
    \end{alignat}
\end{subequations}
where $P_{max}$ denotes the total power constraint and $D_{\chi}$ denotes distance boundary constraints. The inequalities in (\ref{optimal drw}), (\ref{optimal dar}), (\ref{optimal djr}) and (\ref{optimal drb}) characterize the dynamic nature of the communication environment induced by user mobility. Specifically, unlike static scenarios, the locations of the mobile nodes are time-varying, causing the distances $d_{\chi}$ ($\chi \in \{AR, RW, JR, RB\}$) to fluctuate stochastically within the bounded intervals $[D_{L\chi}, D_{H\chi}]$. This introduces uncertainty into the large-scale fading, necessitating an adaptive power allocation strategy.

Specifically, the objective in (\ref{optimal target}) is to maximize QoE, as defined in (\ref{QoE}), while (\ref{optimal AMDEP}) represents the covert constraint, with $\Lambda$ as a predefined parameter specifying a target level of covertness, and $\mathbb{P}_{MDE}$ is given in (\ref{AMDEP Result}). Since secrecy capacity increases with $P_A$, the maximum value of the objective function (\ref{optimal target}) is achieved when the power constraint in (\ref{optimal power}) is satisfied with equality, i.e., $P_A + \hat{P}_J = P_{max}$. Consequently, we can simplify (\ref{optimal 1}) as follows:
\begin{subequations}
    \label{optimal 2}
    \begin{alignat}{2}
        \underset{\kappa}{\text{maximize}}\ \ &QoE,\label{optimal 2 target}\\ 
        \text{subject to}\ \ &0\leq \kappa \leq 1,\label{optimal 2 power}\\
        &\mathbf{E}(\mathbb{P}_{MDE})\geq 1 - \Lambda ,\label{optimal 2 AMDEP}\\
        &D_{LRW}\leq d_{RW} \leq D_{HRW},\\
        &D_{LAR}\leq d_{AR} \leq D_{HAR},\\
        &D_{LRB}\leq d_{JR} \leq D_{HRB},\\
        &D_{LRB}\leq d_{RB} \leq D_{HRB},
    \end{alignat}
\end{subequations}
where $\kappa = P_A /P_{max} = 1 - \hat{P}_J/P_{max}$ is the power allocation coefficient.

\subsection{Generative Diffusion Models Optimization}
In environments with varying distances, the closed-form expressions governing average channel capacity of Bob and secrecy capacity are analytically involved, making it challenging to directly obtain an optimal solution under the given constraints. Furthermore, certain parameters necessary for the optimization process are not available a priori and must be obtained in real time. To solve for the best power distribution, we employ a GMD-based RL model. While learning-based approaches have been successfully applied to other security domains like intrusion detection \cite{9668958}, here we leverage this emerging approach for network optimization problems \cite{du2023beyond}. Given that $\kappa$ is continuous, we apply proximal policy optimization (PPO) as the reinforcement learning policy, a widely used method capable of handling continuous action spaces \cite{schulman2017proximal}.
\subsubsection{Designed State Space, Action, and Reward Function} To address problem (\ref{optimal 2}), the state space, action, and reward function for the GDM Optimization (GDMOPT) framework are defined as follows:

\textit{State:} At each step $t(t=1,2,...,T)$, the state $s_t$ is composed of the total power limit $P_{max}$ and the path losses $\{\mathcal{L}_1, \mathcal{L}_2, \mathcal{L}_3, \mathcal{L}_4\}$ at that step.

\textit{Action:} The action $t(a_t)$ produced by the PPO policy at step $t$ corresponds to the parameter $\kappa$.

\textit{Reward:} At the conclusion of each step, a reward is assigned according to the following function:
\begin{equation}
    r_t=\begin{cases}
        \mathbf{E}(\mathbb{P}_{MDE})-1,&\mathbf{E}(\mathbb{P}_{MDE}) < 1-\Lambda,\\
        QoE,&\mathbf{E}(\mathbb{P}_{MDE})\geq 1-\Lambda.\\
    \end{cases}
\end{equation}

\subsubsection{Basic Principles of GDMOPT}
A deterministic value function $V_{\pi}$ can be defined for every policy $\pi$ \cite{9434397}. The PPO algorithm iteratively refines the policy, enabling the agent to learn from environmental feedback via the value function without requiring prior knowledge. During this process, actions that yield higher rewards are increasingly favored, while those with lower rewards are gradually suppressed, leading to convergence on a policy that maximizes cumulative long-term reward.

In the GDMOPT algorithm, GDM serves as a replacement for the action network in traditional DRL algorithms, with actions produced through a denoising process \cite{wang2022diffusion}. GDM operate through a forward diffusion process, wherein an initial input is progressively corrupted by the addition of Gaussian noise across a series of steps. These noisy samples act as training targets for a denoising network, beginning with the original data sample $\mathbf{x}_0$. At each step $t$, gaussian noise of variance $\beta_t$ is added to $\mathbf{x}_{t-1}$, producing $\mathbf{x}_t$ according to the conditional distribution $q(\mathbf{x}_t|\mathbf{x}_{t-1})$. This transition is mathematically described as follows:
\begin{equation}
    q\left( \mathbf{x}_t|\mathbf{x}_{t-1} \right) =\mathcal{N}\left( \mathbf{x}_t|\boldsymbol{\mu }_t=\sqrt{1-\beta _t}\mathbf{x}_{t-1}, \mathbf{\Sigma}_t=\beta_t\mathbf{I}\right) ,
\end{equation}
where $q\left( \mathbf{x}_t|\mathbf{x}_{t-1} \right)$ represents a normal distribution. In the reverse diffusion process, as $T$ becomes large, $x_{T}$ approaches an isotropic Gaussian distribution \cite{ho2020denoising}. The reverse distribution $q\left( \mathbf{x}_{t-1}|\mathbf{x}_t \right)$ can be learned and approximated by a parameterized model $p{\theta}$, expressed as follows:
\begin{equation}
    p_{\theta}\left( \mathbf{x}_{t-1}|\mathbf{x}_t \right) =\mathcal{N}\left( \mathbf{x}_{t-1}|\boldsymbol{\mu }_{\theta}\left( \mathbf{x}_t,t \right) , \mathbf{\Sigma}_t\left(\mathbf{x}_t,t\right) \right). 
\end{equation}
The reverse generative process, which reconstructs the trajectory from the noisy latent variable $\mathbf{x}_T$ to the clean data $\mathbf{x}_0$, is defined by the following equation:
\begin{equation}
    p_{\theta}\left( \mathbf{x}_{0:T} \right) =p_{\theta}\left( \mathbf{x}_T \right) \prod\limits_{t=1}^T{p_{\theta}\left( \mathbf{x}_{t-1}|\mathbf{x}_t \right)}.
\end{equation}

\subsubsection{The GDMOPT-Based Algorithm for Power Allocation Optimization}
\begin{algorithm}[t]
\caption{Objective Function and Solution Space Definitions}
\label{alg:compute_objective}
\begin{algorithmic}[1]
    \REQUIRE ~\\
    Channel state information $\mathbf{H}$; Action vector $\mathbf{s}$; Total power budget $P_T$; Weights $\alpha,\varsigma$; Reliability threshold $\Lambda$.
    \ENSURE ~\\
    Reward value $r$.

    \STATE \textbf{Initialize:} power allocation vector $\mathbf{p}$ based on action $\mathbf{s}$:
    \begin{equation*}
        \mathbf{p} \leftarrow P_T \frac{\mathbf{s}}{\sum_{m=1}^{M} s_m}
    \end{equation*}

    \STATE Calculate performance metrics $\mathbf{E}(R_S)$, $\mathbf{E}(\mathbb{P}_{MDE})$ and $\mathbf{E}(R_W)$ based on $\mathbf{H}$ and $\mathbf{p}$.
    \IF{$\mathbf{E}(\mathbb{P}_{MDE}) < 1 - \Lambda$}
        \STATE Penalty for violating constraint.
        \STATE $r \leftarrow \mathbf{E}(\mathbb{P}_{MDE}) - 1$
    \ELSE
        \STATE Calculate weighted objective utility.
        \STATE $r \leftarrow QoE$
    \ENDIF

    \RETURN $r$

  \end{algorithmic}
\end{algorithm}

\begin{algorithm}[t]
  \caption{Environment State Initialization}
  \label{alg:env_def}
  \begin{algorithmic}[1]

    \REQUIRE ~\\Dimension $M$; Boundary limits $[L_{min}, L_{max}]$.
    \ENSURE ~\\State vector $\mathbf{s}_{\text{env}}$.
    \STATE \textbf{Initialize:} state vector $\mathbf{s}_{\text{env}}$ with random positions:
    \STATE $\mathbf{s}_{\text{env}} \leftarrow \text{Generate } M \text{ values from } \mathcal{U}(L_{min}, L_{max})$
    \RETURN $\mathbf{s}_{\text{env}}$

  \end{algorithmic}
\end{algorithm}
\begin{algorithm}[t]
    \caption{GDMOPT in Network Optimization}
    \label{alg:GDM}
    \begin{algorithmic}[1]
        \REQUIRE~\\
        \textbf{GDMOPT parameters:}\\
        Denoising step number $N$, exploration noise $\epsilon$;\\
        Learning rates of the solution evaluation network and solution generation network;\\
        Replay buffer $\mathcal{B}$ size, mini–batch size $B$.\\[0.2em]
        \textbf{Environment:}\\
        Training trajectories of the environment state $g$, power constraints and channel statistics.\\[0.4em]

        \ENSURE~\\
        Trained solution generation network $\epsilon_{\theta}$;\\
        for a given environment, optimal power allocation $p_0$.\\[0.6em]
        \STATE \textbf{Initialize:} Empty replay buffer $\mathcal{B}$; randomly initialized
               solution generation network $\epsilon_{\theta}$ and solution
               evaluation network $Q_{\nu}$; initialize environment state $g^{(0)}$.
        \STATE Set the exploration process $\mathcal{N}$ (e.g., Gaussian noise) for power allocation exploration.

        \FOR {each training step $j = 1,2,\ldots,J$}
        \STATE Observe the current environment $g^{(j)}$ using Algorithm~\ref{alg:env_def}.
        \STATE Generate Gaussian noise $p_N$ and obtain
               power allocation $p_0^{(j)}$ by denoising $p_N$ with
               $\epsilon_{\theta}$.
        \STATE Add exploration noise from $\mathcal{N}$ to $p_0^{(j)}$.
        \STATE Apply $p_0^{(j)}$ to the environment and obtain the reward
               $r^{(j)}\!\left(g^{(j)}, p_0^{(j)}\right)$ via Algorithm~\ref{alg:compute_objective}.
        \STATE Observe the next environment state $g^{(j+1)}$.
        \STATE Store the transition
               $\big(g^{(j)}, p_0^{(j)}, r^{(j)}, g^{(j+1)}\big)$ in the replay buffer $\mathcal{B}$.
        \STATE Sample a mini–batch from $\mathcal{B}$ and update $Q_{\nu}$
               according to (\ref{Q-value}).
        \STATE Update the solution generation network $\epsilon_{\theta}$ according to
               (\ref{solution generation network}).
        \STATE Optionally perform target network soft update if needed.
        \ENDFOR
        \STATE \textbf{Inference:} For a given environment state $g$, generate the
               optimal power allocation $p_0$ by denoising Gaussian noise using
               the trained $\epsilon_{\theta}$.
        \STATE \textbf{return} $p_0$
    \end{algorithmic}
\end{algorithm}

Building on the GDMOPT framework and the optimization problem presented in (\ref{optimal 2}), we develop an algorithm for power allocation aimed at maximizing QoE while adhering to specified constraints. The procedure for addressing this optimization challenge via diffusion models is structured as follows:

\textit{Solution Space Definition:} The initial phase of the optimization process involves defining the solution domain. The AI-derived solution denotes the power allocation scheme that leads to maximum QoE. This solution is produced by the GDMOPT via multiple iterations of denoising applied to Gaussian noise, as detailed in \cite{du2023beyond}.

\textit{Objective Function Definition:} Here, the diffusion model is trained with the objective of improving QoE, as shown in Algorithm \ref{alg:compute_objective}.

\textit{Dynamic Environment Definition:} The GDMOPT aims to produce an optimized power allocation strategy adapted to specific channel distance conditions. These distances represent the large-scale fading components of the channel; thus, we consider a general scenario in which each channel distance, denoted as $d_\chi$ $\left(\chi \in \{AR, JR, RB, RW\}\right)$, varies randomly within a specified range, such as $\left(20, 70\right)$, as outlined in Algorithm \ref{alg:env_def}.

\textit{Training and Inference:} To achieve an optimized distribution of transmit power, a conditional GDMOPT is employed, which iteratively denoises the initial distribution to arrive at an optimized solution. The resulting power allocation scheme for a given environment is denoted by $p$. The GDMOPT transforms environmental states into power allocation strategies and is referred to as the solution generation network and represented by $\epsilon_{\theta}\left(p|g\right)$, where $\theta$ represents the neural network parameters. A deterministic power allocation is generated with the purpose of maximizing the expected utility outlined in Algorithm \ref{alg:compute_objective}. This is achieved through a reverse denoising process within a conditional GDMOPT framework, where the ultimate sample generated in this reverse sequence serves as the selected power allocation strategy.

In our optimization problem, the absence of an expert dataset necessitates the introduction of a solution evaluation network, denoted as $Q_\upsilon$. This network assigns a $Q_\upsilon$ to each pair of environment and power allocation $(g, p)$, which represents the expected value of the objective function. The construction of the optimal solution generation network is achieved through
\begin{equation}
    \label{solution generation network}
    \text{arg\ }\min _{\boldsymbol{\epsilon }_{\theta}}\mathcal{L}_{\boldsymbol{\epsilon }}\left( \theta \right) =-\mathbf{E}_{\boldsymbol{p}_0\backsim \boldsymbol{\epsilon }_{\theta}}\left[ Q_{\upsilon}\left( \mathbf{g,p}_0 \right) \right] .
\end{equation}
To ensure accuracy, the solution evaluation network $Q_\upsilon$ is trained with the objective of minimizing the deviation between its predicted Q-values and the actual Q-values. Consequently, the optimization task for $Q_\upsilon$ is defined as
\begin{equation}
    \label{Q-value}
    \operatorname*{arg\,min}_{Q_{\upsilon}} \mathcal{L}_Q\left( \upsilon \right) = \mathbf{E}_{\mathbf{p}_0 \sim \pi_{\theta}} \left[ \left( r\left( \mathbf{g}, \mathbf{p}_0 \right) - Q_{\upsilon}\left( \mathbf{g}, \mathbf{p}_0 \right) \right)^2 \right].
\end{equation}
Here, $r$ denotes the objective function value obtained when the generated power allocation strategy $p_0$ is applied to the environment state $g$. The overall GDM procedure for secrecy-capacity maximization is presented in Algorithm~\ref{alg:GDM}~\cite{du2023beyond}.

Under the standard assumption that the computational cost of a forward/backward pass grows linearly with the number of network parameters \cite{du2023beyond,3524483}, the overall training complexity of the proposed GDM-PPO algorithm can be approximated as follows. Let $N_{\pi}$ and $N_V$ denote the numbers of parameters in the diffusion-based policy network and the value network, respectively; $Z$ the number of denoising steps in the diffusion policy; $N_{\text{it}}$ the number of PPO iterations; $N_{\text{tr}}$ the number of environmental time steps (samples) collected per iteration; and $K$ the number of gradient epochs per iteration. In vanilla PPO, the dominant per-iteration complexity is on the order of $\mathcal{O}\!\big(K N_{\text{tr}} (N_{\pi} + N_V)\big)$ \cite{schulman2017proximal,3524483}. In GDM-PPO, each policy evaluation and update must propagate through $Z$ denoising steps, which effectively multiplies all policy-related computation by $Z$, following the same pattern as diffusion-based SAC and DDPG in \cite{3524483}. Consequently, the total training complexity scales as
\begin{equation}
    \mathcal{O}\!\big(N_{\text{it}} K N_{\text{tr}} (Z N_{\pi} + N_V)\big),
\end{equation}
while the inference-time complexity of generating a trajectory of length $T$ is $\mathcal{O}(T Z N_{\pi})$, i.e., linear in both the number of diffusion steps and the size of policy network.

\section{Numerical Results}
\label{sec:Numerical Result}
In this section, we assess the system performance by conducting numerical analysis, while the accuracy of the analytical results is confirmed through Monte Carlo simulations. Table \ref{tab:Parameters Setting} provides the simulation parameters, which are selected based on established settings in the literature \cite{lv2021covert,10210075}. Additionally, the reverse diffusion mechanism enables the generation of new data, facilitating a comprehensive examination of the system’s performance under various conditions.
\begin{table*}[t]
    \centering   
    \caption{Parameters Setting}
    \label{tab:Parameters Setting}
    \begin{tabular}{|c|c|}
        \hline
        Bandwidth                                       & $B=1$ MHz                                               \\ \hline
        Number of RIS reflecting elements               & $N = 8$ and $128$                                      \\ \hline
        Amplitude-reflection coeficient of the RIS      & $\beta = 0.9$                                          \\ \hline
        Distances                                       & $d_{AR}=d_{JR}=100$ m, $d_{RB}=50$ m, $d_{RW}=20$ m         \\ \hline
        Path-loss exponents                             & $\alpha_{AR}=3,\alpha_{JR}=\alpha_{RB}=\alpha_{RW}=2.5$ \\ \hline
        Nakagami fading pararmeters                     & $m_{AR}=m_{JR}=m_{RB}=m_{RW}=3$                        \\ \hline
        Noise power                                     & $\sigma^2_B=\sigma^2_W=-80$ dBm                         \\ \hline
        Number of points for Gauss-Laguerre quadratures & $u_1=u_2=100$                                          \\ \hline
    \end{tabular}
\end{table*}

\subsection{AMDEP, Average Capacities, and QoE}
\begin{figure}[]
    \centering
    \subfloat[]{\label{PJ_10dBm_AMDEP}\includegraphics[width=0.5\textwidth]{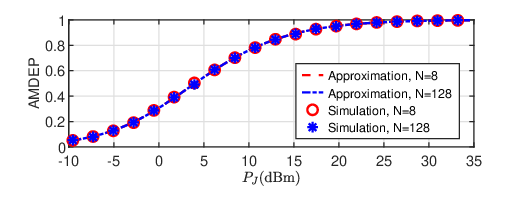}}\\
    \subfloat[]{\label{PA_10dBm_AMDEP}\includegraphics[width=0.5\textwidth]{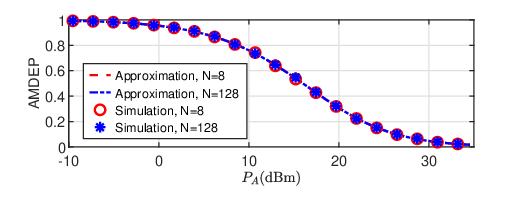}}
    \caption{AMDEP versus Alice’s transmit power and Jammer’s transmit power. $\left(a\right)\hat{P}_J=40$ dBm. $\left(b\right) P_A=40$ dBm.}\label{Pic_amdep}
\end{figure}

Fig. \ref{PJ_10dBm_AMDEP} illustrates the variation of Willie's AMDEP with respect to Alice’s transmit power, while Jammer's transmit power is set to 40 dBm. Similarly, Fig. \ref{PA_10dBm_AMDEP} illustrates the AMDEP of Willie with respect to the Jammer's transmit power, where Alice's transmit power is held constant at 40 dBm. For the two considered cases, with $N = 8$ and $N = 128$, the results show that the AMDEP decreases as $P_A$ increases, approaching 0 as $P_A \rightarrow \infty$ (as seen in Fig. \ref{PJ_10dBm_AMDEP}), and increases with $P_J$, converging to 1 as $\hat{P}_J \rightarrow \infty$ (as shown in Fig. \ref{PA_10dBm_AMDEP}). This behavior is consistent with expectations, as a larger jamming-to-signal power ratio increases the uncertainty faced by Willie. Additionally, the number of RIS elements has limited impact on the AMDEP, as the major factors influencing covert performance are the powers of the jammer and Alice. In both subfigures, the simulation results align closely with the analytical results provided by (\ref{AMDEP}), confirming thus the validity of our analysis. Furthermore, when the DEP threshold is higher than $80\%$, the derived analytical expressions are suitable for use in power allocation optimization.

\begin{figure}[]{\includegraphics[width=0.5\textwidth]{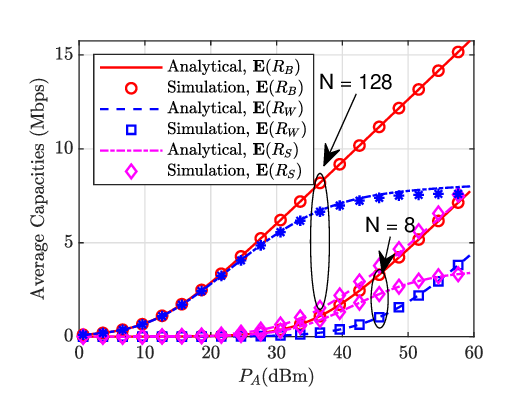}}
    \centering
    \caption{Average channel capacities and ASC versus Alice’s transmit power where $\hat{P}_J=40$ dBm.}\label{Pic_rate}
\end{figure}

Fig. \ref{Pic_rate} illustrates the relationship between Alice's transmit power and the average channel capacity of Bob, ASC, and the average channel capacity of Willie for $N=8$ and $N=128$. In particular, AMDEP is independent of $N$, attributed to the scale invariance where the channel scale parameter cancels out in the ratio of received powers. Initially, it is evident that the simulated average capacities and ASC are closely aligned with the analytical results obtained from (\ref{ASC}), (\ref{Average RB}), and (\ref{Average RW}). Furthermore, the average channel capacity of Bob increases with higher values of $P_A$. Additionally, an increase in $N$ leads to an enhancement of the ASC. The accuracy of this analysis supports the application of analytical results in the optimization of power allocation.

\begin{figure}[]{\includegraphics[width=0.5\textwidth]{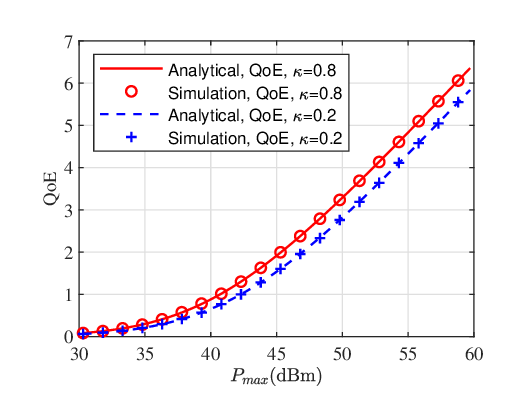}}
    \centering
    \caption{QoE versus $P_{max}$, where $\kappa = 0.8$ and $\kappa = 0.2$, respectively.}\label{Pic_qoe_05}
\end{figure}

As illustrated in Fig. \ref{Pic_qoe_05}, under fixed power distribution coefficients ($\kappa = 0.8$ and $\kappa = 0.2$) and with $N = 8$, QoE increases with rising $P_{max}$. It is observed that the AMDEP remains unaffected by $P_{max}$, as the minimum DEP in (\ref{DEP_optimal}) is contingent upon the ratio $P_A/\hat{P}_J$. Consequently, the expected AMDEP in (\ref{AMDEP Result}) depends on this ratio rather than being influenced by $P_{max}$. In contrast, it is noted that ASC increases with an increase in $P_{max}$.

\subsection{Power Allocation Optimization for Maximizing QoE}

\begin{figure}[]
    \centering
    \subfloat[]{\label{k_dep}\includegraphics[width=0.5\textwidth]{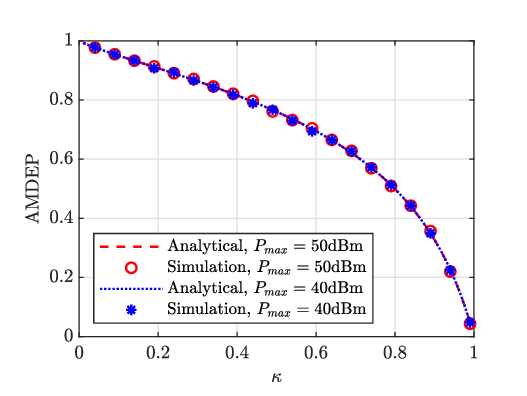}}\\
    \subfloat[]{\label{k_rate}\includegraphics[width=0.5\textwidth]{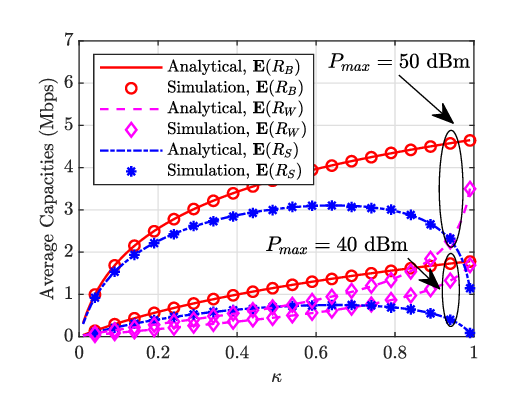}}\\
    \subfloat[]{\label{k_qoe}\includegraphics[width=0.5\textwidth]{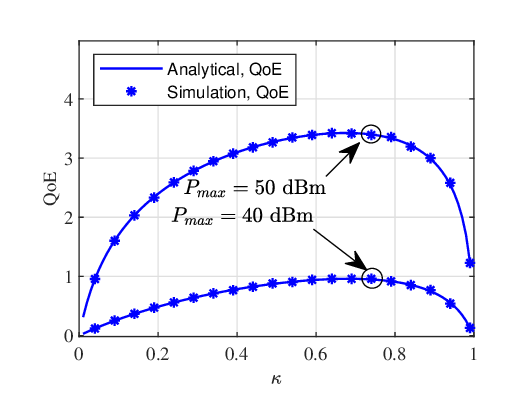}}
    \caption{AMDEP, ASC and QoE versus $\kappa$, where $P_{max}=40$ dBm and $P_{max}=50$ dBm, respectively.}\label{Pic_VersusKappa}
\end{figure}

We depict the variations of AMDEP as functions of $\kappa$ in Fig. \ref{k_dep}. As $\kappa$ increases, a decreasing trend is observed in AMDEP is observed. Since the AMDEP threshold is typically above $80\%$, the analytical results for AMDEP are suitable for use in the optimization problem. 

In Fig. \ref{k_rate}, it can be seen that as $\kappa$ increases, the ASC first grows, attaining a maximum, and then gradually decreases with further increases in $\kappa$ and the derived results align closely with the simulation outcomes. Therefore, the analytical results for ASC are also applicable to the optimization problem.

Fig. \ref{k_qoe} illustrates the variation of Quality of Experience (QoE) in relation to $\kappa$ (where $\kappa = P_A / P_{max} = 1 - \hat{P}_J / P_{max}$). The observed trend is primarily influenced by the ASC, exhibiting an initial increase followed by a subsequent decrease as $\kappa$ varies. In the second part of QoE, however, a declining AMDEP combined with a rapidly increasing $R_{W}$ results in a diminished impact on QoE. This leads to an overall trend that closely aligns with variations in ASC.

\begin{figure}[]{\includegraphics[width=0.5\textwidth]{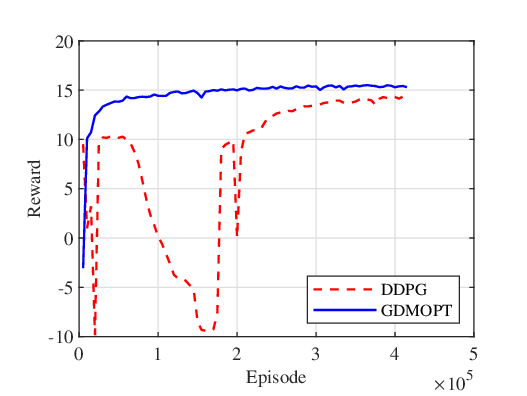}}
    \centering
    \caption{Test reward curves of GDM-aided and DDPG-aided optimization methods.}\label{GDM_Reward}
\end{figure}

Fig. \ref{GDM_Reward} illustrates the training process of the power allocation algorithm based on GDMOPT. It is observed that the total reward for the GDMOPT converges after approximately $100,000$ episodes. The final power allocation coefficient is represented by the converged action, while the resulting reward indicates the QoE achieved within the given constraints. The results reveal the conventional DRL method, i.e., DDPG, converges slowly, and its converged value is smaller compared to the GDM-based algorithm.

\begin{figure}[]{\includegraphics[width=0.5\textwidth]{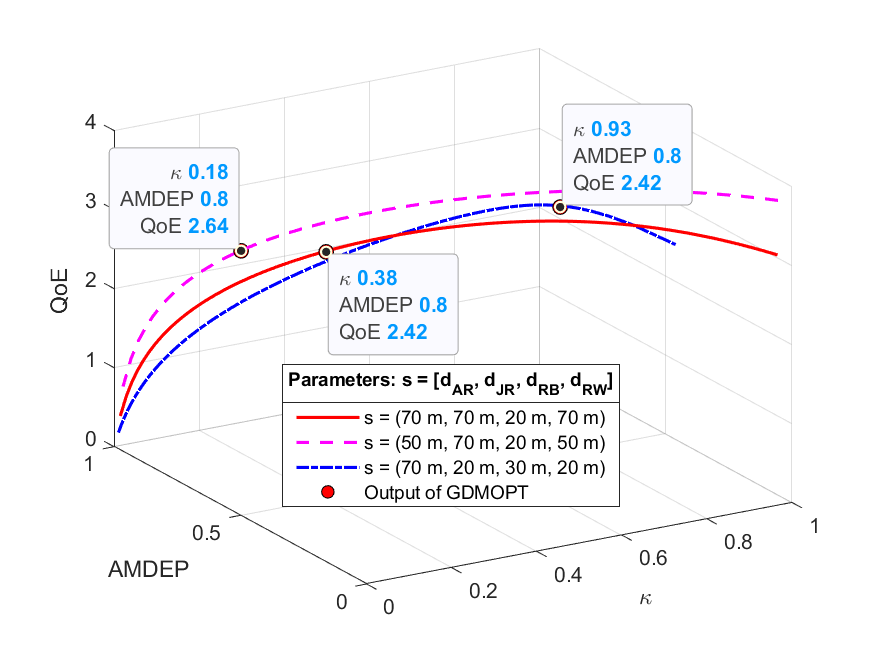}}
    \centering
    \caption{QoE and AMDEP versus $\kappa$ under different environmental distance parameters when $P_{\max} = 40$ dBm. The discrete markers indicate the optimal power allocation solutions output by the GDMOPT algorithm.}\label{RLvsMAT}
\end{figure}

To assess the effectiveness of the proposed GDM-based power allocation algorithm in mobile scenarios, the QoE and AMDEP trajectories are plotted against $\kappa$ for distinct environmental distance configurations in Fig. \ref{RLvsMAT}. Specifically, we define the objective of the maximization problem as zero if the constraints are not satisfied, and the output of GDMOPT matches the maximum value subject to the constraints.

\section{Conclusion}
\label{sec:Conclusion}
In this paper, we have proposed an SSACC system  that simultaneously ensures ASC and AMDEP to enhance transmission QoE. By deriving analytical expressions for both covertness and secrecy performance and designing a GDM-enhanced DRL algorithm, we demonstrated that the proposed approach can effectively optimize power allocation to improve system security under diverse user location conditions. The simulation results verified the advantages of the scheme in balancing covertness and secrecy capacity. This study underscores the potential of integrating RIS and DRL for SC and CC. Future research could explore more complex multi-user scenarios to further enhance transmission security and expand the applicability of the proposed framework.

\bibliographystyle{IEEEtran}
\bibliography{IEEEabrv, reference}

\vspace{11pt}
\vfill
\end{document}